\newtheorem{claim}{Claim}
\newtheorem{definition}{Definition}
\newtheorem*{notation}{Notation}
\newtheorem*{note}{Note}
\DeclareMathOperator{\range}{range}
\DeclareMathOperator{\diagstar}{diag*}
\newcommand{\Reals}{\mathbb{R}}
\newcommand{\T}{\mathsf{T}}
\newcommand{\vx}{\ensuremath{\vec{x}}}
\newcommand{\vh}{\ensuremath{\vec{h}}}
\newcommand{\vb}{\ensuremath{\vec{b}}}
\newcommand{\U}{\ensuremath{\mathcal{U}}}
\newcommand{\V}{\ensuremath{\mathcal{V}}}
\newcommand{\B}{\ensuremath{\mathcal{B}}}
\begin{document}

    \begin{center}
        \Large
        Spectral Learning of Hidden Markov Model \\ Predictive State Representations\\
        \normalsize
        Matthew James Johnson, MIT\\
        \footnotesize
        Revised \today
    \end{center}

    This document is my summary explanation of the algorithm in
    \href{http://www.cs.mcgill.ca/~colt2009/papers/011.pdf}{``A Spectral
        Algorithm for Learning Hidden Markov Models''} (COLT 2009), though
    there may be some slight notational inconsistencies with the original
    paper. The exposition and the math here are quite different, so if you
    don't like this explanation, try the original paper!

    The idea is to maintain output predictions in a recursive inference
    algorithm, instead of the usual method of maintaining hidden state
    predictions, and to represent the HMM only in terms of the maps necessary
    to update output predictions given new data. This approach limits the
    inference computations the algorithm can perform (it can't answer any
    queries about the hidden states since it doesn't explicitly deal with them
    at all), but it also reduces the complexity of the model parameters that
    are learned and thus makes learning easier. The learning algorithm uses an
    SVD and matrix operations, so it avoids the local-optima problems of EM or
    any other algorithms based on maximizing data likelihood over the usual HMM
    parameterization. The COLT paper includes error bounds and analysis.

    \begin{notation}
        For a vector $v$ in a subspace $\V \subseteq \Reals^k$ and a matrix $C$
        with linearly independent columns and $\range(C) \supseteq \V$ I will
        use $[v]^C$ to denote the coordinate vector of $v$ relative to the
        ordered basis  given by the columns of $C$, and $[v]$ or simply
        $v$ to denote the coordinate vector of $v$ relative to the standard
        basis of $\Reals^k$.  Similarly, for a linear map $\mathcal{A}: \V \to
        \V$ I will use $[\mathcal{A}]_C^C$ to denote the matrix of
        $\mathcal{A}$ relative to domain and codomain bases given by the
        columns of $C$, and $[\mathcal{A}]$ to indicate its matrix relative to
        standard bases. For a matrix $A$ I will also use $[A]_{ij}$ to denote
        the $(i,j)$th entry.
    \end{notation}

    \begin{figure}[ht]
        \includegraphics[width=2.5in]{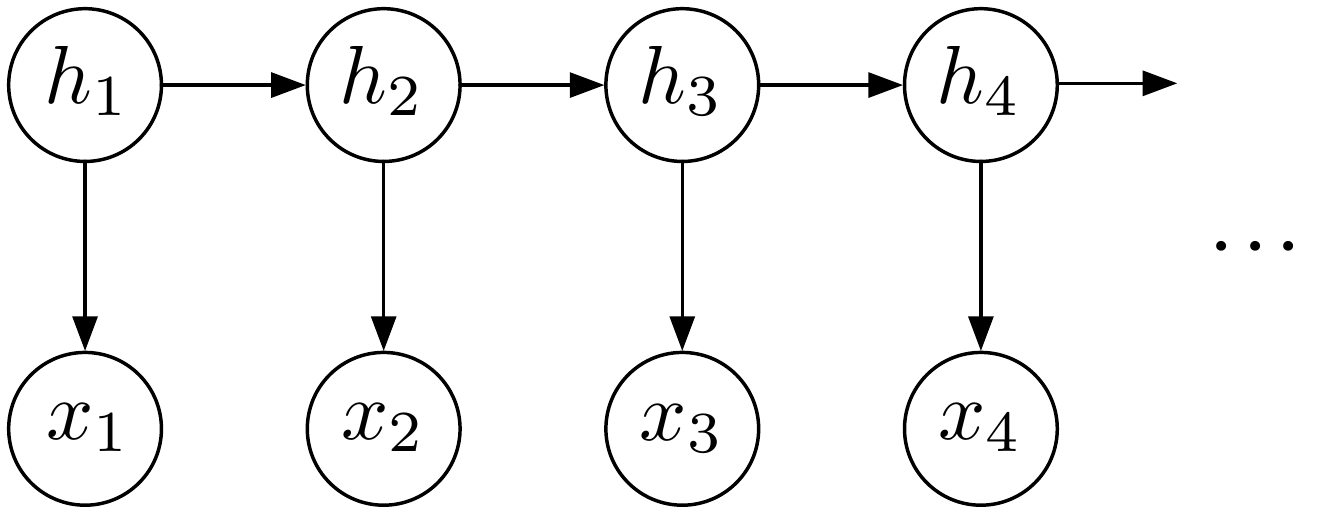}
        \centering
    \end{figure}

    \begin{definition}[Hidden Markov Model]
        A time-homogeneous, discrete \emph{Hidden Markov Model (HMM)} is a
        probability distribution on random variables $\{(x_t,h_t)\}_{t \in
            \mathbb{N}}$ satisfying the conditional independences implied by the
        graphical model, where $\range(h_t) = [m] := \{1,2,\ldots,m\}$ and
        $\range(x_t) = [n]$ where $n \geq m$. The \emph{standard
            parameterization} is the triple $(T,O,\pi)$, where
        \begin{align*}
            T \in \Reals^{m \times m}, \quad [T]_{ij} &= \Pr[h_t = i | h_{t-1} = j] \\
            O \in \Reals^{n \times m}, \quad [O]_{ij} &= \Pr[x_t = i | h_t = j] \\
            \pi \in \Reals^m, \quad [\pi]_j &= \Pr[h_1 = j].\\
        \end{align*}
        We will assume $T$ and $O$ to have full column rank and $[\pi]_j > 0 \;
        \forall j \in [m]$.
    \end{definition}

    \begin{definition}[Observation Prediction]
        An \emph{observation prediction} for any time $t$ is a vector
        $\vx_{t} \in \Reals^n$ defined in the standard basis by
        \begin{align}
            [\vx_t]_i := \Pr[x_t = i | x_{1:t-1} = \bar{x}_{1:t-1}]
        \end{align}
        for some fixed (implicit) sequence $\bar{x}_{1:t-1}$.
    \end{definition}

    \begin{claim}
        Every observation prediction $\vx_t$ lies in a subspace $\U :=
        \range(O) \subseteq \Reals^n$ with $\dim(\U) = m$.
    \end{claim}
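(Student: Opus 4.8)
The plan is to show that $\vx_t$ is the image under $O$ of a hidden-state belief vector; membership in $\U = \range(O)$ is then immediate, and the dimension count falls out of the full-column-rank hypothesis on $O$. So the proof factors the observation prediction through the hidden state, which is exactly the conditional-independence structure of the HMM.

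First I would introduce the \emph{hidden-state prediction} $\vh_t \in \Reals^m$ defined by $[\vh_t]_j := \Pr[h_t = j \mid x_{1:t-1} = \bar{x}_{1:t-1}]$ (for $t=1$ the conditioning set is empty, so this is just $\pi$). Then I would expand $[\vx_t]_i$ by the law of total probability, conditioning on the value of $h_t$:
\[
  [\vx_t]_i = \Pr[x_t = i \mid x_{1:t-1}] = \sum_{j=1}^m \Pr[x_t = i \mid h_t = j,\, x_{1:t-1}]\,\Pr[h_t = j \mid x_{1:t-1}].
\]
The crucial step is to invoke the conditional independence encoded by the graphical model: given $h_t$, the emission $x_t$ is independent of the earlier observations $x_{1:t-1}$, so $\Pr[x_t = i \mid h_t = j,\, x_{1:t-1}] = \Pr[x_t = i \mid h_t = j] = [O]_{ij}$. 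Substituting gives $[\vx_t]_i = \sum_j [O]_{ij}[\vh_t]_j = [O\vh_t]_i$, i.e.\ $\vx_t = O\vh_t \in \range(O) = \U$. Finally, since $O$ was assumed to have full column rank, its $m$ columns are linearly independent, so $\dim(\U) = \dim(\range(O)) = m$, and $n \ge m$ ensures this is a genuine $m$-dimensional subspace of $\Reals^n$.

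The only point I expect to require care is the justification of $x_t \perp x_{1:t-1} \mid h_t$ from the figure: one should check that $h_t$ d-separates $x_t$ from every $x_s$ with $s<t$, which it does, since in the HMM chain every path from $x_t$ back to an earlier observation runs through $h_t$ and $h_t$ is not a collider on any such path. Everything else is a one-line computation, and the edge case $t=1$ is handled by reading the empty conditioning as no conditioning, which yields $\vx_1 = O\pi$.
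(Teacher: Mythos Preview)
Your proposal is correct and follows essentially the same route as the paper: expand $\vx_t$ by conditioning on $h_t$, invoke the HMM conditional independence to identify the factors with $[O]_{ij}$ and $[\vh_t]_j$, and conclude $\vx_t = O\vh_t \in \range(O)$. Your write-up is a bit more explicit about the d-separation justification, the $\dim(\U)=m$ step from the full-column-rank assumption, and the $t=1$ edge case, but the argument is the same.
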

    \begin{proof}
        By the conditional independences of the HMM, for any $t$ we have
        \begin{align}
            \Pr[x_t = i | x_{1:t-1} = \bar{x}_{1:t-1}] &= \sum_{j \in [m]} \Pr[x_t = i | h_t = j] \nonumber \\
            &\quad \cdot \Pr[h_t = j | x_{1:t-1} = \bar{x}_{1:t-1}]
        \end{align}
        so therefore we can write
        \begin{align}
            \vx_{t} = O \vh_{t}, \quad [\vh_t]_j := \Pr[h_t = j | x_{1:t-1} =
            \bar{x}_{1:t-1}].
        \end{align}
        Equivalently, we can say $[\vx_t]^O = \vh_t$. See Figure~\ref{fig:claim1}.
    \end{proof}

    \begin{figure}
        \centering
        \includegraphics[height=2.5in]{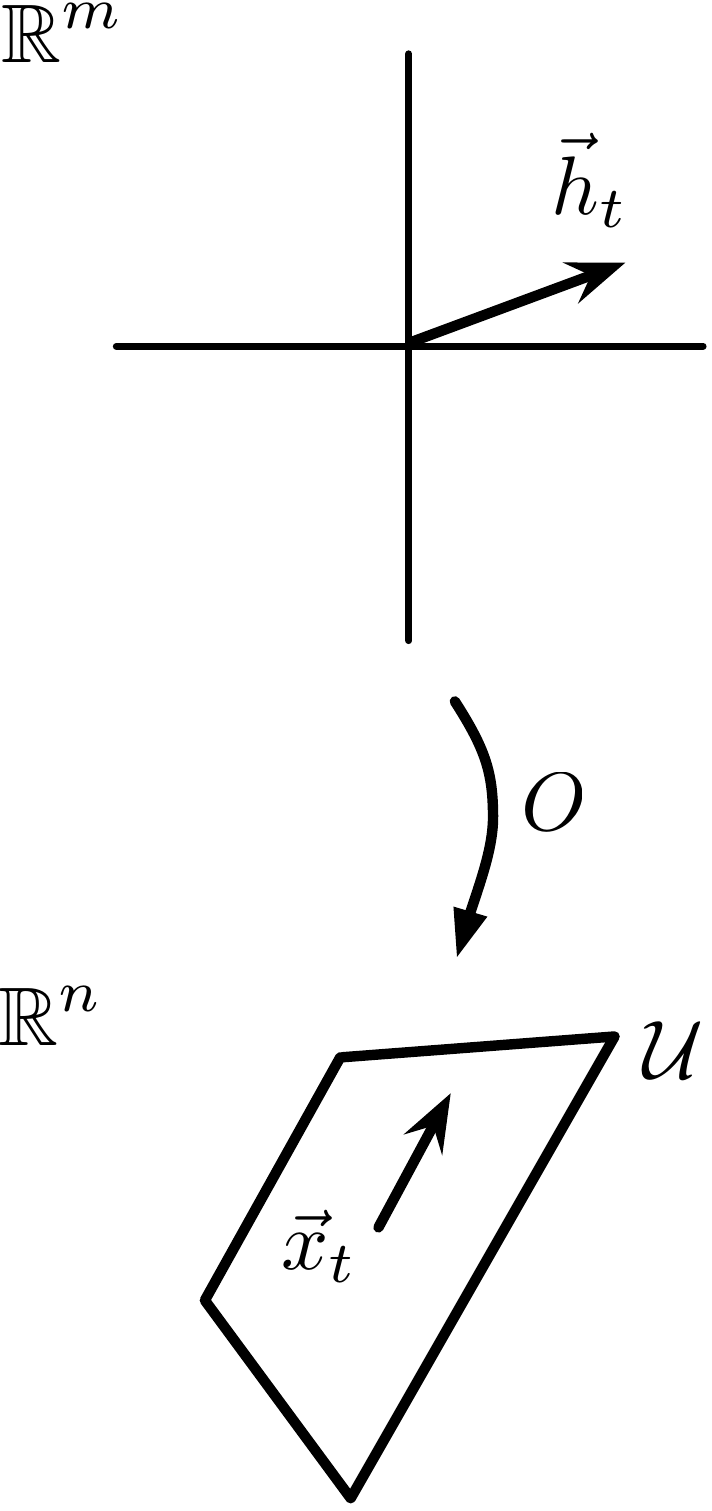}
        \caption{We can view $O$ as (the matrix of) a map from hidden state beliefs to output predictions (with respect to the standard bases). Not shown is the fact that both $\vh_t$ and $\vx_t$ lie in the simplices of $\Reals^m$ and $\Reals^n$, respectively, and that $O$ maps the simplex in $\Reals^m$ to (a subset of) the simplex in $\Reals^n$.}
        \label{fig:claim1}
    \end{figure}

    \begin{claim}
        The observation prediction subspace $\U$ satisfies $\U =
        \range(P_{2,1})$, where $[P_{2,1}]_{ij} := \Pr[x_2=i,x_1=j]$
        \label{claim:prediction-subspace}
    \end{claim}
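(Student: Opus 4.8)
The plan is to find an explicit factorization of $P_{2,1}$ in terms of the standard parameters $(T,O,\pi)$ and then read off its range by a rank count.

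First I would expand the $(i,j)$th entry $[P_{2,1}]_{ij} = \Pr[x_2 = i, x_1 = j]$ by summing over the two hidden states $h_1,h_2$ and applying the conditional independences of the HMM:
\begin{align}
    \Pr[x_2 = i, x_1 = j] &= \sum_{a,b \in [m]} \Pr[x_2 = i \mid h_2 = b]\, \Pr[h_2 = b \mid h_1 = a]\, \Pr[x_1 = j \mid h_1 = a]\, \Pr[h_1 = a].
\end{align}
Each factor is an entry of $O$, $T$, $O$, and $\pi$ respectively, so in matrix form this says $P_{2,1} = O\,T\,\operatorname{diag}(\pi)\,O^\T$, where $\operatorname{diag}(\pi) \in \Reals^{m \times m}$ is the diagonal matrix with $\pi$ on its diagonal. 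The only thing to be careful about here is the index bookkeeping with the two copies of $O$ and the transpose.

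Given that factorization, one inclusion is immediate: every column of $P_{2,1}$ is $O$ applied to some vector, so $\range(P_{2,1}) \subseteq \range(O) = \U$. For the reverse inclusion I would show the right-hand factor $M := T\,\operatorname{diag}(\pi)\,O^\T$ is surjective onto $\Reals^m$: by the HMM definition $O$ has full column rank $m$, so $O^\T$ has full row rank $m$; $\operatorname{diag}(\pi)$ is invertible since $[\pi]_j > 0$ for every $j$; and $T$ is a full-column-rank $m \times m$ matrix, hence invertible. Therefore $\range(M) = \Reals^m$, and consequently $\range(P_{2,1}) = \range(OM) = O\bigl(\range(M)\bigr) = O(\Reals^m) = \range(O) = \U$. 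As a sanity check, this also shows $P_{2,1}$ has rank exactly $m$, matching $\dim(\U) = m$ from Claim~1.

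I expect no real obstacle: the only place care is needed is turning the probabilistic expansion into the correct matrix product, after which the range computation is routine linear algebra resting entirely on the full-rank and strict-positivity assumptions baked into the definition of the HMM. If one preferred to avoid the explicit factorization, an alternative is to note that the $j$th column of $P_{2,1}$ equals $\Pr[x_1 = j]$ times the observation prediction for $x_2$ conditioned on $x_1 = j$, which lies in $\U$ by Claim~1; that gives $\range(P_{2,1}) \subseteq \U$ directly, but one would still have to check separately that the spanning set it produces is $m$-dimensional, so the factorization route is cleaner.
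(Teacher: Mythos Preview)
Your proof is correct and follows essentially the same approach as the paper: derive the factorization $P_{2,1} = O\,T\,\diagstar(\pi)\,O^\T$ from the HMM conditional independences, then invoke the full-rank and positivity assumptions to conclude $\range(P_{2,1}) = \range(O) = \U$. You are simply more explicit than the paper in spelling out why $T\,\diagstar(\pi)\,O^\T$ is surjective onto $\Reals^m$, which the paper compresses into the phrase ``by our rank and positivity assumptions.''
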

    \begin{proof}
        We can write the joint distribution over $(x_1,x_2)$ as
        \begin{align}
            \Pr[x_2 = i, x_1 = j] &= \sum_{\bar{h}_1} \sum_{\bar{h}_2}
        \Pr[x_2=i,x_1=j,h_1=\bar{h}_1,h_2=\bar{h}_2]\\
        &= \sum_{\bar{h}_1} \Pr[h_1 = \bar{h}_1] \Pr[x_1 = j | h_1 = \bar{h}_1] \nonumber \\
        &\quad \cdot \sum_{\bar{h}_2} \Pr[h_2=\bar{h}_2 | h_1 = \bar{h}_1]  Pr[x_2 = i | h_2 = \bar{h}_2]
    \end{align}
        and we can write that sum as $P_{2,1} = O T \diagstar(\pi) O^\T$, where
        $\diagstar(\cdot)$ maps a vector to a diagonal matrix in the usual way. By
        our rank and positivity assumptions, we see that $P_{2,1}$ satisfies
        $\range(P_{2,1}) = \range(O) = \U$.
    \end{proof}

    We can directly estimate $P_{2,1}$ with empirical statistics, and as a
    consequence of Claim~\ref{claim:prediction-subspace} we can then get a
    basis for $\U$ by using an SVD of $P_{2,1}$. Note that we aren't getting an
    estimate of $O$ this way, but just its column space.

    \begin{claim}
        Given an observation $x_t = \bar{x}_t$, there is a linear map
        $\B_{\bar{x}_t}: \U \to \U$ such that
        \begin{align}
            \B_{\bar{x}_t}(\vx_t) = \alpha \vx_{t+1}
        \end{align}
        for some $\alpha = \alpha(\bar{x}_t,\vx_t)$, a scalar normalization
        factor chosen to ensure $1^\T \vx_{t+1} = 1$.
        \label{claim:update-map}
    \end{claim}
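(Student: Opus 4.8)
The plan is to run the ordinary HMM forward (Bayes) filter, push the resulting hidden-state update through $O$, and then use the full-column-rank hypothesis on $O$ to re-express everything in terms of output predictions alone. First I would recall from Claim~1 that $\vx_s = O\vh_s$ for every $s$, where $[\vh_s]_j = \Pr[h_s = j \mid x_{1:s-1} = \bar x_{1:s-1}]$ is the filtered hidden-state belief. Conditioning additionally on $x_t = \bar x_t$, Bayes' rule together with the HMM conditional independences gives the one-step filter update
\begin{align*}
[\vh_{t+1}]_k \;\propto\; \sum_{j\in[m]} \Pr[h_{t+1}=k\mid h_t=j]\,\Pr[x_t=\bar x_t\mid h_t=j]\,[\vh_t]_j,
\end{align*}
that is, $\alpha\,\vh_{t+1} = A_{\bar x_t}\,\vh_t$ where $A_{\bar x_t} := T\,\diagstar(r_{\bar x_t}) \in \Reals^{m\times m}$, the vector $r_{\bar x_t}\in\Reals^m$ is the $\bar x_t$-th row of $O$ (the likelihood vector $j \mapsto [O]_{\bar x_t,j}$), and $\alpha$ is the Bayes normalizer. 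Note $A_{\bar x_t}$ depends only on the observed symbol $\bar x_t$, not on $t$ or on the history, so one map per symbol is well-defined.

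Next I would apply $O$ to both sides to get $\alpha\,\vx_{t+1} = O A_{\bar x_t}\vh_t$ and eliminate $\vh_t$: since $O$ has full column rank it admits a left inverse $O^+$ with $O^+O = I_m$, so $\vh_t = O^+\vx_t$ and hence $\alpha\,\vx_{t+1} = (O A_{\bar x_t} O^+)\,\vx_t$. I would then define $\B_{\bar x_t}$ to be the linear map with $[\B_{\bar x_t}] = O A_{\bar x_t} O^+$; equivalently, its matrix in the ordered basis given by the columns of $O$ is simply $[\B_{\bar x_t}]_O^O = A_{\bar x_t}$, which makes the claim transparent, since in $O$-coordinates the update reads $[\vx_t]^O = \vh_t \mapsto A_{\bar x_t}\vh_t = \alpha\,[\vx_{t+1}]^O$. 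That $\B_{\bar x_t}$ maps $\U$ into $\U$ is immediate: any $v\in\U=\range(O)$ is $v = Ow$, whence $O^+v = w$ (every left inverse of $O$ agrees on $\range(O)$), so $\B_{\bar x_t}v = O(A_{\bar x_t}w)\in\range(O)=\U$ — and this also shows the restriction of $\B_{\bar x_t}$ to $\U$ does not depend on the choice of $O^+$. Finally I would pin down the normalization: applying $\ones^\T$ to $\alpha\,\vx_{t+1}=OA_{\bar x_t}\vh_t$ and using that the columns of $O$ and of $T$ are probability vectors gives $\alpha = \ones^\T O A_{\bar x_t}\vh_t = \sum_{j} [O]_{\bar x_t,j}[\vh_t]_j = \Pr[x_t = \bar x_t \mid x_{1:t-1}] = [\vx_t]_{\bar x_t}$, which is exactly a scalar function of $\bar x_t$ and $\vx_t$ and forces $\ones^\T\vx_{t+1} = 1$, as required.

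The derivation itself has no real obstruction — it is the Bayes filter plus linear algebra — so the step I would be most careful about is the elimination of $\vh_t$: it works precisely because $O$ is injective on $\Reals^m$, so that the output prediction $\vx_t$ alone determines the hidden belief $\vh_t$ and the update collapses to a single fixed linear map (per symbol) of $\vx_t$. The genuinely hard part is not this claim but what comes after it: the formula $O A_{\bar x_t} O^+$ references $O$, which the algorithm never estimates, so one must instead express $\B_{\bar x_t}$ in the basis produced by the SVD of $P_{2,1}$ (Claim~\ref{claim:prediction-subspace}) and show that \emph{that} representation is identifiable from observable statistics.
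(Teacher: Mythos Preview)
Your proof is correct and follows essentially the same route as the paper: derive the forward-filter recursion $\alpha\,\vh_{t+1} = T\,\diagstar(O_{\bar x_t:})\,\vh_t$, read off $[\B_{\bar x_t}]^O_O = T\,\diagstar(O_{\bar x_t:})$ (equivalently $[\B_{\bar x_t}] = O\,T\,\diagstar(O_{\bar x_t:})\,O^+$), and recover $\alpha$ from the normalization condition. You are a bit more thorough than the paper --- explicitly checking $\B_{\bar x_t}(\U)\subseteq\U$, noting independence from the choice of left inverse, and giving the cleaner expression $\alpha = [\vx_t]_{\bar x_t}$ --- but the substance is identical.
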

    \begin{proof}
        Following the usual recursive update for HMM forward messages, we have
        \begin{align}
            \Pr[h_{t+1} = i, x_{1:t} = \bar{x}_{1:t}]
            =
            \sum_{j \in [m]}
            \underbrace{\Pr[ h_{t+1} = i | h_t = j ]}_{[T]_{ij}}.
            \underbrace{\Pr[ x_t = \bar{x}_t | h_t = j ]}_{[O]_{\bar{x}_t j}} \nonumber \\
            \cdot \underbrace{\Pr[ h_t = j, x_{1:t-1} = \bar{x}_{1:t-1} ]}_{\propto [\vx_t]^O_j}.
        \end{align}
        Therefore we can write the map $\B_{\bar{x}_t}$ as an $m \times m$
        matrix relative to the basis for $\U$ given by the columns of $O$:
        \begin{align}
            [\B_{\bar{x}_t}]^O_O = T \diagstar(O_{\bar{x}_t:})
        \end{align}
        where $O_{k:}$ denotes the vector formed by the $k$th row of $O$. We
        require $1^\T \vx_{t+1}=1$, so we have $\alpha = O_{\bar{x}_t:}^\T
        \vx_t$.

        See Figure~\ref{fig:claim3}.
    \end{proof}

    \begin{figure}
        \centering
        \includegraphics[height=2.5in]{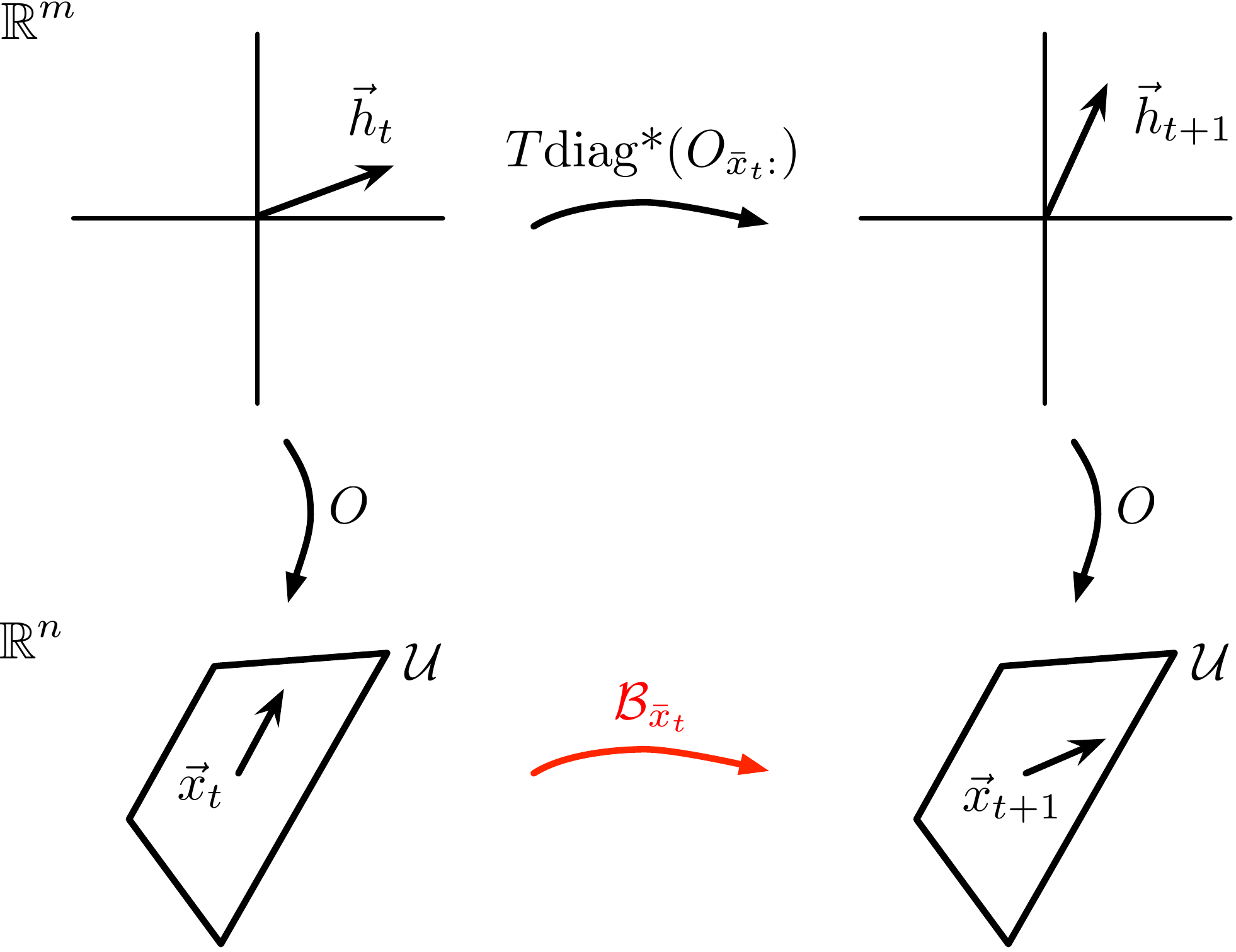}
        \caption{The matrix $T\diagstar(O_{\bar{x}_t :})$ is the matrix (relative to
            standard bases) of a linear map that updates hidden state beliefs
            given an observation $\bar{x}_t$, up to renormalization which the
            figure does not show. The linear map $\B_{\bar{x}_t}$ is the update
            for output predictions.}
        \label{fig:claim3}
    \end{figure}

    \begin{note}
        Renormalization always works because $T \diagstar(O_{\bar{x}_t :})$
        preserves the non-negative orthant of $\Reals^m$, and $O$ maps the simplex
        in $\Reals^m$ to (a subset of) the simplex in $\Reals^n$. The orthant
        preservation properties of these maps are an immediate consequence of the
        fact that the matrices (with respect to standard bases) are entry-wise
        non-negative. In fact, instead of tracking vectors, we should be tracking
        rays in the non-negative orthant.
    \end{note}

    For any new observation $x$, the map $\B_x$ implements the ``belief
    update'' on output predictions, up to a normalization factor which we can
    compute on the fly via $1^\T \vx_t = 1$. Note that we can also write $\B_x$
    as an $n \times n$ matrix relative to the standard basis of $\Reals^n$:
    \begin{align}
        [\B_x] = O T \diagstar(O_{x:}) O^\dagger
        \label{eq:B-standard-basis}
    \end{align}
    where $O^\dagger := (O^\T O)^{-1} O^\T$ is the pseudoinverse of $O$. Recall
    $\vx_t = O \vh_t$ and hence $\vh_t = O^\dagger \vx_t$.

    We would like to write $\B_x$ as a matrix without reference to the
    standard HMM parameters $(T,O,\pi)$, since we want to avoid learning them
    at all.

    \begin{claim}
        Let $U \in \Reals^{n \times m}$ be a matrix whose columns form an
        orthonormal basis for $\U$.
        We can write $\B_x$ as a matrix relative to the standard basis of $\Reals^n$ as
        \begin{align}
        [\B_x] = P_{3,x,1} U (U^\T P_{2,1} U)^{-1} U^\T
        \end{align}
        where
        \begin{align}
            [P_{3,x,1}]_{ij} := \Pr[x_3=i,x_2=x,x_1=j] \; \forall x \in [n].
        \end{align}
        \label{claim:update-map-observable}
    \end{claim}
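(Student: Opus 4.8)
Proof proposal. The plan is to reduce the claim to two expressions already in hand: $[\B_x] = O\,T\,\diagstar(O_{x:})\,O^\dagger$ from \eqref{eq:B-standard-basis}, and $P_{2,1} = O\,T\,\diagstar(\pi)\,O^\T$ from the proof of Claim~\ref{claim:prediction-subspace}. First I would expand $P_{3,x,1}$ in exactly the way $P_{2,1}$ was expanded in Claim~\ref{claim:prediction-subspace}: sum over $h_1,h_2,h_3$ and peel off one hidden variable at a time. The conditional independences of the HMM give
\[
  P_{3,x,1} \;=\; O\,T\,\diagstar(O_{x:})\,T\,\diagstar(\pi)\,O^\T .
\]
The crucial observation is that, using $O^\dagger O = I_m$, this factors as
\[
  P_{3,x,1} \;=\; \bigl(O\,T\,\diagstar(O_{x:})\,O^\dagger\bigr)\bigl(O\,T\,\diagstar(\pi)\,O^\T\bigr) \;=\; [\B_x]\,P_{2,1}.
\]
In words: the three-way statistic is ``apply the update map $\B_x$, then take the two-way statistic''. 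So it remains to show that $P_{2,1}\,U\,(U^\T P_{2,1} U)^{-1} U^\T$ is a right identity for $[\B_x]$.

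For that I would switch from $O$ to $U$. Since the columns of $O$ and of $U$ are both bases of $\U$, there is an invertible $M \in \Reals^{m\times m}$ with $O = UM$ (take $M = U^\T O$, invertible since $O$ has full column rank). Then $U^\T O = M$, $O^\T U = M^\T$, and so
\[
  U^\T P_{2,1} U \;=\; M\,T\,\diagstar(\pi)\,M^\T ,
\]
which is invertible because $M$ is invertible, $\diagstar(\pi)$ is invertible by the positivity of $\pi$, and $T$ is a full-column-rank square matrix hence invertible. This is the step where the model's rank and positivity hypotheses are used to guarantee that the inverse appearing in the claim exists. Substituting $O = UM$ into $P_{2,1}\,U\,(U^\T P_{2,1} U)^{-1} U^\T$ and cancelling the telescoping product $M T \diagstar(\pi) M^\T \cdot \bigl(M T \diagstar(\pi) M^\T\bigr)^{-1}$ then shows this expression equals $U U^\T$, the orthogonal projector onto $\U$.

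Finally I would note that $[\B_x]\,U U^\T = [\B_x]$: the rightmost factor of $[\B_x] = O\,T\,\diagstar(O_{x:})\,O^\dagger$ is $O^\dagger$, whose rows span (a subspace of) $\range\bigl(O(O^\T O)^{-1}\bigr) = \range(O) = \U$, so $O^\dagger U U^\T = O^\dagger$. Composing the three steps,
\[
  P_{3,x,1}\,U\,(U^\T P_{2,1} U)^{-1} U^\T \;=\; [\B_x]\,P_{2,1}\,U\,(U^\T P_{2,1} U)^{-1} U^\T \;=\; [\B_x]\,U U^\T \;=\; [\B_x],
\]
which is the claim.

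I expect the only real obstacle to be spotting the first identity, $P_{3,x,1} = [\B_x]\,P_{2,1}$; once that factorization is in hand, everything else is routine linear-algebra bookkeeping, with the single point that needs care being the invertibility of $U^\T P_{2,1} U$.
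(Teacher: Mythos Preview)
Your proof is correct and follows essentially the same route as the paper: both establish $P_{3,x,1} = [\B_x]\,P_{2,1}$ and then cancel $P_{2,1}$ from the right via the orthonormal basis $U$. The only difference is presentational---the paper uses the identity $P_{2,1} = U(U^\T P_{2,1} U)U^\T$ directly (valid because $P_{2,1}$ has both row and column space $\U$) where you instead pass through the explicit change of basis $O = UM$---and your version has the minor virtue of spelling out the invertibility of $U^\T P_{2,1} U$ and the step $[\B_x]\,UU^\T = [\B_x]$, both of which the paper leaves to the reader.
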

    \begin{proof}
        We can express the matrix $P_{3,x,1}$ in a form similar to that for the
        matrix $[\B_x]$ in Equation~\eqref{eq:B-standard-basis}:
        \begin{align}
            P_{3,x,1} = \underbrace{O T \diagstar(O_{x:}) O^\dagger}_{[\B_x]} P_{2,1}.
        \end{align}
        Intuitively, we want to remove the $P_{2,1}$ on the right, since that
        would give us $[\B_x]$ in terms of quantities we can readily estimate,
        but we cannot form the inverse of $P_{2,1}$ because it is $n \times n$
        and has rank $m \leq n$.  However, $P_{2,1}$ has row and column space
        $\U$ (intuitively, its restriction to $\U$ is invertible), thus we can
        substitute
        \begin{align}
            P_{2,1} = U (U^\T P_{2,1} U) U^\T
        \end{align}
        to get
        \begin{align}
            P_{3,x,1} = O T \diagstar(O_{x:}) O^\dagger U (U^\T P_{2,1} U) U^\T
        \end{align}
        and hence
        \begin{align}
            P_{3,x,1} U (U^\T P_{2,1} U)^{-1} U^\T = O T \diagstar(O_{x:}) O^\dagger = [\B_x].
        \end{align}
    \end{proof}

    Because we can estimate each $P_{3,x,1}$ as well as $P_{2,1}$ from data by
    empirical statistics, and we can obtain a $U$ using an SVD, we can now estimate
    a representation of $\B_x$ from data using the expression in
    Claim~\ref{claim:update-map-observable}. We can also directly estimate
    $P_1$ from empirical statistics, where $[P_1]_i := \Pr[x_1 = i]$, and hence we can use
    these estimated quantities to recursively compute $\Pr[x_t|x_{1:t-1}]$ and
    $\Pr[x_{1:t-1}]$ given observations up to and including time $t-1$.

    Since $\dim(U)=m \leq n$, we can use the columns of $U$ as our basis for $\U$ to
    get a more economical coordinate representation of $\vx_t$ and $\B_x$ than
    in the standard basis of $\Reals^n$:

    \begin{definition}[HMM PSR Representation]
        For any fixed $U \in \Reals^{n \times m}$ with $\range(U)=\U$ and $U^\T U=I_{m
            \times m}$, we define the \emph{belief vector} at time $t$ by
        \begin{align}
            \vb_t := [\vx_t]^U = U^\T \vx_t \in \Reals^m
        \end{align}
        and in particular for $t=1$ we have
        \begin{align}
            \vec{b}_1 = U^\T O \pi.
        \end{align}
        For each possible observation $x \in [n]$, we define the matrix $B_x
        \in \Reals^{m \times m}$ by
        \begin{align}
            B_x := [\B_x]^U_U = (U^\T O) T \diagstar(O_{x:}) (U^\T O)^{-1}.
        \end{align}
        Finally, for normalization purposes, it is convenient to maintain the
        appropriate mapping of the ones (co-)vector, noting $1^\T O \vh = 1$ if
        and only if $1^\T \vh = 1$ because $1^\T O = 1^\T$:
        \begin{align}
            \vec{b}_\infty := [1]^U = U^\T 1 \in \Reals^m.
        \end{align}
    \end{definition}

    The box below summarizes the method for learning an HMM PSR representation
    from data and how to use an HMM PSR representation to perform some
    recursive inference computations.

    \begin{framed}
    \noindent \textbf{Learning}

    \begin{align*}
        \widehat{U} &= \text{ThinSVD}(\widehat{P}_{2,1})\\
        \widehat{b}_1 &= \widehat{U}^\T \widehat{P}_1\\
        \widehat{B}_x &= \widehat{U}^\T \widehat{P}_{3,x,1} (\widehat{U}^\T \widehat{P}_{2,1})^\dagger \; \forall x \in [n]\\
        \widehat{b}_\infty &= \widehat{U}^\T1 %(\widehat{P}_{2,1}^\T \widehat{U})^\dagger \widehat{P}_1.
    \end{align*}

    \paragraph{Inference}

    \begin{align*}
        \Pr[x_{1:t}] &= \vec{b}_\infty^\T B_{x_{t:1}} \vec{b}_1 &\text{sequence probability}\\
        \Pr[x_t | x_{1:t-1}] &= \vec{b}_\infty^\T B_{x_t} \vec{b}_t & \text{prediction}\\
        \vec{b}_{t+1} &= \frac{B_{x_t} \vec{b}_t}{\vec{b}_\infty^\T B_{x_t} \vec{b}_t} &\text{recursive update}
    \end{align*}
    \end{framed}
\end{document}